\newcommand{\D}{\mathcal{D}}
\newcommand{\U}{\mathbf{U}}
\newcommand{\Snc}{\mathbf{S}}
\newcommand{\T}{\mathbf{T}}
\newcommand{\R}{\mathbf{R}}
\newcommand{\N}{\mathbb{N}}
\newcommand{\Z}{\mathbb{Z}}
\newcommand{\X}{\mathbf{X}}
\newcommand{\Y}{\mathbf{Y}}
\newcommand{\Zed}{\mathbf{Z}}
\newcommand{\iFF}{\Leftrightarrow}
\newtheorem{theorem}{Theorem}
\newtheorem{corollary}[theorem]{Corollary}
\newtheorem{definition}[theorem]{Definition}
\begin{document}

\title{Bounded Reachability for Temporal Logic over Constraint Systems}

\author[1]{Marcello M. Bersani}
\affil[1]{Politecnico di Milano\\
Milano, Italy\\
\{bersani,frigeri,morzenti,rossi,sanpietro\}@elet.polimi.it}
\author[1]{Achille Frigeri}
\author[1]{Angelo Morzenti}
\author[2]{\\Matteo Pradella}
\affil[2]{CNR IEIIT-MI\\
Milano, Italy\\
pradella@elet.polimi.it}
\author[1]{Matteo Rossi}
\author[1]{Pierluigi San Pietro}

\maketitle
\thispagestyle{empty}

\begin{abstract}
This paper defines CLTLB($\D$), an extension of PLTLB (PLTL with both past and future
operators) augmented with atomic formulae built over a constraint
system $\D$.
The paper introduces suitable
restrictions and assumptions that 
make the satisfiability problem decidable in many cases, although the problem is undecidable in the
general case.
Decidability is shown for a large class of constraint systems, and an
encoding into Boolean logic is defined. This paves the way for applying existing SMT-solvers for 
checking the Bounded Reachability problem, as shown by various experimental results.
\end{abstract}


\section{Introduction}
Many extensions of temporal logic or automata have been proposed
with the goal of verifying infinite-state systems.
Among the many extensions of Propositional Linear Temporal Logic (PLTL),
there have been proposals of allowing formulae which may include arithmetic
constraints belonging to a specific constraint system \cite{CC00,DD02} .
These logics are well-suited to define properties of
infinite-state systems, but, 
unfortunately for the aim of automatic verification, previous results have also shown 
the undecidability of the satisfiability problem, at least in the general case \cite{DG06}.
Here we define a more general logic, called CLTLB($\D$), 
which is an extension of PLTLB (PLTL with Both future and past
operators), allowing arithmetic
constraints belonging to a generic constraint system.
To cope with undecidability, already known for the less general case above, we introduce suitable assumptions
concerning the structure of models, but without 
any syntactic restriction on formulae.
Models only consider partial valuations of
arithmetic variables: the satisfiability of CLTLB($\D$) then turns to be decidable,
provided that the constraint system $\D$ has a decidable decision procedure.
We then define the Bounded Reachability Problem (BRP) for CLTLB($\D$),
which can be decided by showing its equivalence 
to the satisfiability of
CLTLB($\D$) over partial valuations. 
We realized a Bounded Reachability Checker by using SMT-solvers natively implementing 
decision procedures for Quantifier-Free Integer Difference Logic with Uninterpreted Functions (QF-UFIDL)
and Quantifier-Free Linear Integer Arithmetic with Uninterpreted Functions (QF-UFLIA).
Experimental results using the Zot toolkit \cite{PMS07,PMS09} 
show that, the greater expressiveness of CLTLB($\D$) notwithstanding, 
the encoding of the propositional part is considerably faster and with smaller
memory footprint than existing encodings of PLTL based on SAT.

The paper is structured as follows.
Section \ref{sec:undec} relates on the state of the art in extending PLTL with constraint systems.
Section \ref{sec:cltlb} introduces CLTLB($D$), while Section \ref{sec:dec} presents various 
decidability and undecidability results. 
Section \ref{sec:brp} introduces and solves the BRP.
Section \ref{sec:encoding} defines an encoding of CLTLB($\D$) into logics suitable for SMT-based verification.
Section \ref{sec:tests} relates on the performance of experimental results of the original SAT-based plugins of Zot
with the SMT-based ones on a number of examples taken from different application domains.
Finally, Section~\ref{sec:conclusions} draws a few conclusions and outlines future research. 

\section{State of the art}\label{sec:undec}

Among the various proposals of extension of LTL, 
CLTL (Counter LTL) has been defined in \cite{CC00}.
CLTL is, essentially, Propositional LTL with future
operators (PLTL), with in addition terms that are
arithmetic constraints in Integer Difference
Logic (DL).
However, by reducing the recurrence
problem for Minsky machines to the satisfiability of a CLTL
formula, it is shown that the logic is undecidable, hence unsuitable for automatic verification.

A generalization of CLTL is 
CLTL($\D$) \cite{DD02}, where the arithmetic constraints belong to a general
constraint system $\D$.
If $\D$ has an intrinsic counting mechanism, i.e., it
contains equality and a binary injective relation $R$ such that its graph is a DAG,
then CLTL($\D$) is undecidable. 
Indeed, a relation satisfying the hypothesis of
the theorem generalizes the ``successor'' function and
can be used to define constraints of the form $y = z + 1$.
\cite{DG06} proves the undecidability of the
satisfiability problem of CLTL$_m^l$(DL), which is the restriction of
CLTL(DL) to formulae with at most $m$ variables and of depth
less or equal to $l$.
CLTL$_m^l$(DL) is shown to be 
$\Sigma_1^1$-hard for $m>1$
and $l>1$, while CLTL$_1^1$(DL) is shown to be \textsc{PSPACE}-complete.

For practical model-checking, a large variety of infinite-state systems can be
effectively represented by counters systems.
In this case, interesting results on verifying safety and reachability
properties can be obtained by constraining the control graph of the counters
system to be flat \cite{CJ98, B98}, i.e., no control state occours in more
than one simple cycle.
Properties are defined by means of Presburger arithmetic constraints
but they are not considered in the framework of any temporal logic, for
instance, like CLTL or CLTL($\D$) described above.
In \cite{DFGD06}, authors extend some results about flat systems to more general classes
of infinite-state systems in which some first-order extensions of CTL$^\star$
have decidable model-checking. 

To cope with undecidability,  \cite{dMRS02} 
describes a reduction of infinite BMC to a satisfiability problem of Boolean
constraints formulae.
By translating LTL formulae into a corresponding B{\"u}chi
automaton, a BMC problem is reduced to the satisfiability of a mixed arithmetic-Boolean
formula.
The authors also give a proof of soundness and completeness for the $\U$-free fragment of the
logic.
In this case, the BMC problem is solved by means of a loop-free encoding, since
$\U$-free formulae
can always be  translated into an automaton over \emph{finite} words accepting
a prefix of all infinite paths which satisfy it.
In all other cases, generic LTL formulae are translated into
a corresponding B\"uchi automaton with acceptance conditions involving
an implicit periodicity constraint over counters.
However, this translation does not work when counters do not behave periodically.
For instance, consider a transition system defining a non-periodic, strictly-increasing
counter $x$ starting at $0$.
Property $\top\U (x<0)$ does not hold for this system,
but the B{\"u}chi automaton corresponding to its negation imposes a periodic constraint over the sequence of values of $x$, which cannot be satisfied.
Hence, using the translation outlined above, verification of formula $\top\U (x<0)$ for the strictly-increasing counter improperly yields true.

We define a complementary, purely descriptive, approach which solves this problem.
It is also aimed at solving reachability problems for infinite-state systems
whose propositional, possibly periodic, behaviors induce a finite prefix of values of variables and
satisfying a CLTLB($\D$) specification, instead of LTL properties just
over arithmetic constraints. 

\section{A Temporal Logic over Constraint Systems}\label{sec:cltlb}

This section presents an extension to Kamp's \cite{Kam68} PLTLB,
by allowing formulae over a constraint system.
As suggested in \cite{CC00}, and unlike the approach of \cite{D04}, the propositional variables
of this logic are 
Boolean terms or atomic arithmetic constraints.

Let $V$ be a set of variables; a {\em constraint system} is a pair
$\D=\langle D, \Pi\rangle$ where $D$ is a specific domain of interpretation for variables and
constants and
$\Pi$ is a family of relations on elements of $D$.
An {\em atomic $\D$-constraint} is a term of the form $R^n(x_1,\dots,x_n)$, where $R^n$
is an $n$-ary relation on $D$ and $x_1, \dots, x_n$ are variables.
A $\D$-valuation is a mapping $v: V \to D$, i.e., an assignment of a value in $D$
to each variable.
A constraint is {\em satisfied} by a $D$-valuation $v$,
written $v \models R(x_1,\dots,x_n)$, if $\left(v(x_1),\dots,v(x_n) \right)
\in R$.

Let $AP$ be a set
of atomic propositions and $\D=\langle D, \Pi\rangle$ a constraint system. CLTLB($\D$) is
defined as an
extension of PLTLB,
by combining Boolean atoms with arithmetic temporal terms defined in $\D$.
The resulting logic is actually equivalent to the quantifier-free fragment of
FOLTL \cite{E90} over signature $\{\Pi, AP\}$.
The syntax of CLTLB($\D$) is defined as follows:
\begin{equation*}
\begin{gathered}
  \phi :=
  \left\{
  \begin{gathered}
    p \mid R(\varphi_1, \dots, \varphi_n) \mid \phi \wedge \phi \mid \neg \phi \mid \\
   \X\phi \mid \Y\phi 
\mid \phi\U\phi \mid \phi\Snc\phi
  \end{gathered}
  \right.\\
  \varphi := x \mid \X \varphi \mid \Y \varphi
\end{gathered}
\end{equation*}
where $p \in AP$, $x \in V$, $\X$ and $\Y$  are the usual ``next'' and  ``previous'' operators,
$\U$ and $\Snc$ are the usual ``until'' and ``since''
operators, $R \in \Pi$, $\X^j$ and $\Y^j$ are shorthands for
$j$ applications of $\X$ and $\Y$ (e.g.,
$\X^2 \equiv \X\X$). 
Each formula $\varphi$ is called an \emph{arithmetic temporal term} (a.t.t.).
Its \emph{depth} $|\varphi|$ is
the total amount of temporal shift needed in evaluating $\varphi$: 
\begin{equation*}
\begin{gathered}
|x| = 0,\\
|X(\varphi)| = |\varphi| + 1,\\
|Y(\varphi)| = |\varphi| - 1.
\end{gathered}
\end{equation*}

Let $\phi$ be a CLTLB($\D$) formula, $x$ a variable and $\Gamma_x$ the set of all
a.t.t.'s occurring in $\phi$ in which $x$ appears. We define the ``look-forwards''
$\lceil \phi \rceil_x$ and
``look-backwards''
$\lfloor \phi \rfloor_x$ of $\phi$ relatively to $x$ as:
\[
\begin{gathered}
\lceil \phi \rceil_x = \max_{{\varphi_i} \in \Gamma_x}\{0, |\varphi_i|\}\\
\lfloor \phi \rfloor_x = \min_{{\varphi_i} \in \Gamma_x}\{0, |\varphi_i|\}
\end{gathered}
\]
The above definitions may naturally be extended to the set $V$ of all variables (by letting
$\lceil \phi \rceil = \max_{x\in
V} \{\lceil \phi \rceil_x\}$,
$\lfloor \phi \rfloor_x =  \min_{x\in V}\{\lfloor \phi \rfloor_x\}$).
Hence, $\lceil \phi \rceil$ ($\lfloor \phi
\rfloor$) is the largest (smallest) depth of all the a.t.t.'s of $\phi$, representing
the length of the future (past) segment needed to evaluate 
$\phi$ in the current instant.

The semantics of a formula $\phi$ of CLTLB($\D$) is defined w.r.t. a
linear time structure $\pi_{\sigma}=(S, s_0, I, \pi, \sigma, L)$, where $S$ is a set of
states, $s_0$ is the initial state,
$I:\{j \mid \lfloor \phi\rfloor \le j \le -1\}
\times V\to D$ is an assignment,
$\pi\in s_0 S^\omega$ is an \emph{infinite path}, 
$\sigma: \N \times V \to D$ is a sequence of $\D$-valuations
and $L: S \to 2^{AP}$ is a labeling function.
From now on, the set of all sequences of $\D$-valuations is denoted by
$\Sigma$.
Function $I$ defines the valuation of variables for each time instant in
$\{j \mid \lfloor \phi\rfloor \le j \le -1\}$, i.e., for time instants before 0; this way $\sigma$ can be extended to a.t.t.'s.
Indeed, if $\varphi$ is an a.t.t., $x$ is the variable in $\varphi$,
$i\in \mathbb{N}$
and $\sigma^i(x)$ is a shorthand for $\sigma(i,x)$, then:
\begin{equation*}
\sigma^i(\varphi) = \left\{
                      \begin{array}{ll}
                        \sigma^{i+|\varphi|}(x), & \hbox{if $i+|\varphi|\geq 0$;} \\
                        I(i+|\varphi|,x), & \hbox{if $i+|\varphi|<0$.}
                      \end{array}
                    \right.
\end{equation*}
The semantics of a CLTLB($\D$) formula $\phi$ at instant $i\in\N$
over a linear structure $\pi_{\sigma}$ is recursively defined by means of a satisfaction relation
$\models$ as follows, for every formulae $\phi, \psi$ and
for every a.t.t. $\varphi$:
\begin{equation*}
\begin{aligned}
\pi_\sigma^i \models p  &\iFF  p \in L(s_i) \text{ for } p \in AP \\
\pi_\sigma^i \models R(\varphi_1, \dots, \varphi_n) &\iFF \\
  (\sigma^{i+|\varphi_1|}(x_{\varphi_1}), &\dots, \sigma^{i+|\varphi_n|}(x_{\varphi_n}) ) \in R\\
\pi_\sigma^i \models \neg p &\iFF  \pi_\sigma^{i} \not\models p \\
\pi_\sigma^i \models \phi \wedge \psi &\iFF  \pi_\sigma^i \models \phi
\, \text{and} \, \pi_\sigma^i \models \psi\\
\pi_\sigma^i \models \X \phi &\iFF \pi_\sigma^{i+1} \models \phi \\
\pi_\sigma^i \models \Y \phi &\iFF \pi_\sigma^{i-1} \models
\phi \wedge i>0\\
\pi_\sigma^i \models \phi\U\psi &\iFF
\left\{ \begin{gathered}
  \exists \, j\geq i: \pi_\sigma^j \models \psi \, \wedge \\ \pi_\sigma^n
\models \phi \ \ \forall\, i\leq n < j
\end{gathered}\right. \\
\pi_\sigma^i \models \phi\Snc\psi &\iFF
\left\{\begin{gathered}
\exists \, 0\leq j \leq i: \pi_\sigma^j \models \psi \, \wedge \\
\pi_\sigma^n \models \phi \ \ \forall\, j < n \leq i
\end{gathered}\right. \\
\end{aligned}
\end{equation*}
where $x_{\varphi_i}$ is the variable that appears in $\varphi_i$.
The semantics of $\phi$ is well defined, as any valuation
$\sigma^i$ is defined for all $i\geq \lfloor \phi \rfloor$, because of assignment $I$.
A formula $\phi \in$ CLTLB($\D$) is \emph{satisfiable} if there exists a linear
time structure $\pi_{\sigma}=(S, s_0, I,\pi, \sigma, L)$ such that $\pi_\sigma^0
\models
\phi$ (in which case $\pi_{\sigma}$ is a \emph{model} of $\phi$).
Without loss of generality, one may assume that all formulae are in \emph{positive normal
form},
where negation may only occur in front of atomic constraints.
In fact, by introducing as primitive the connective $\vee$, the dual operators
``release'' $\R$, ``trigger'' $\T$ and ``previous'' $\Zed$
defined as: 
$\phi\R \psi
\equiv \neg(\neg \phi \U \neg\psi)$, $\phi\T \psi
\equiv \neg(\neg \phi \Snc \neg\psi)$ and $\Zed \phi \equiv \neg \Y \neg \phi$, and
by applying De Morgan's rules, every CLTLB formula can be rewritten
into positive normal form.
\section{(Un)decidability of CLTLB($\D$)}\label{sec:dec}
As a first result, by exploiting well-know properties of PLTLB, we prove the equivalence of CLTLB($\D$) to
CLTL($\D$) for a quantifier-free constraint system $\D$,
w.r.t. \emph{initial} equivalence.
Then, as a corollary of results described in
Section \ref{sec:undec}, we obtain the undecidability of
CLTLB($\mathcal{D}$) for a large class of constraint systems.

In the following, as customary, we denote with $\pi$ a structure for a PLTLB formula.  
\begin{definition}\label{def:gl/in-eq}
Two PLTLB formulae $\phi, \psi$ are \emph{globally} equivalent, written
$\phi \equiv_g \psi$, if for all linear-time structures $\pi$ it is $\pi^i \models \phi \iFF \pi^i \models
\psi$ 
for all $i \in \N$.
Two PLTLB formulae $\phi, \psi$ are \emph{initially} equivalent, written
$\phi \equiv_i \psi$, when $\pi^0 \models \phi \iFF \pi^0 \models
\psi$ for all linear-time structures $\pi$.
\end{definition}
In \cite{GPSS80} it is shown that any PLTLB formula is initially
equivalent to a PLTL formula, while the two logics are not globally
equivalent (see \cite{Sch02} for details).
In order to extend this result to the constrained case, we need to
introduce new temporal operators.
CLTLB($\D$), as defined in Section \ref{sec:cltlb},
includes the ``non-strict'' until (resp. since) operator, in which formula
$\phi \U \psi$ (resp. $\phi \Snc \psi$) holds in an instant $i$ when $\psi$ holds in $i$, and only if $\phi$ holds starting from $i$.
The ``strict'' version of until
$\U^>$, instead, does not require this:
\begin{equation*}
\pi_\sigma^i \models \phi\,\U^>\psi \iFF
\left\{ \begin{gathered}
  \exists \, j > i: \pi_\sigma^j \models \psi \, \wedge \\ \pi_\sigma^n
\models \phi \ \ \forall\, i < n < j
\end{gathered}\right.
\end{equation*}
and similarly for the strict since $\Snc^>$.
It is well known that the following global equivalences hold for any $\phi$,$\psi$:
\begin{equation*}
\begin{array}{cc}
\X \phi \equiv_g \perp \U^> \phi, & \phi \U \psi \equiv_g \psi \vee ( \phi \wedge
\phi \U^> \psi); \\
\Y \phi \equiv_g \perp \Snc^> \phi, & \phi \Snc \psi \equiv_g \psi
\vee (\phi
\wedge \phi \Snc^> \psi). \\
\end{array}
\end{equation*}
Using the previous equivalences, Gabbay \cite{G87} proved that any PLTLB
formula is globally equivalent to a separated PLTLB formula, i.e. a
Boolean combination of formulae containing either $\U^>$ ($\U^>$-formulae) or
$\Snc^>$ ($\Snc^>$-formulae), but not both.
Since this theorem preserves all semantic properties, i.e., it is actually
a rewriting syntactic procedure over formulae, it extends also to the
case of CLTLB($\D$), provided that each arithmetic constraint is
accounted as a propositional letter.
In particular, a.t.t.'s $\X x$/$\Y x$ are
not rewritten using strict-until/-since operators, but are considered
as is, since their semantics depends on the underlying sequence $\sigma$ as
defined before.
Then, we need to show that $\Snc^>$-formulae can be translated
into \emph{initially} equivalent $\U^>$-formulae.
More precisely, we prove the following:
\begin{theorem} \label{thm: init_glob_equiv}
Any CLTLB($\D$) formula is initially
equivalent to a CLTL($\D$) formula, while the two logics are not globally
equivalent.
\end{theorem}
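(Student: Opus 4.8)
The plan is to handle the two claims separately. The non-global-equivalence part I would simply inherit from the propositional setting: both CLTL($\D$) and CLTLB($\D$) conservatively extend PLTL and PLTLB (e.g.\ over a trivial constraint system), and by \cite{Sch02} some PLTLB formula is globally inequivalent to every PLTL formula. The witness $\Y\top$ still works here: on a model that is constant from instant $0$ onwards (and on its finite pre-initial segment), the truth value of every CLTL($\D$) formula at instant $0$ equals its value at instant $1$, since only future operators and future-or-pre-initial a.t.t.\ values are consulted and these coincide, whereas $\Y\top$ is false at $0$ and true at $1$. Hence no CLTL($\D$) formula is globally equivalent to the CLTLB($\D$) formula $\Y\top$.

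For the initial-equivalence part I would follow the route set up before the statement. First, put $\phi$ in positive normal form and rewrite $\X,\Y,\U,\Snc$ using the listed global equivalences; then invoke Gabbay's separation theorem, which, being a syntactic rewriting on the temporal-operator structure only, applies to CLTLB($\D$) once each atomic $\D$-constraint is treated as a propositional letter and the a.t.t.'s $\X^j x,\Y^j x$ are left untouched. This yields $\phi \equiv_g \Phi$ where $\Phi = B(F_1,\dots,F_p,\,P_1,\dots,P_q)$ is a Boolean combination of $\U^>$-formulae $F_i$ (strict until as the only temporal operator) and $\Snc^>$-formulae $P_j$ (strict since as the only temporal operator).

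The heart of the proof is the claim that every $\Snc^>$-formula is initially equivalent to a $\U^>$-formula. Here I would argue that, at instant $0$, no instant strictly precedes $0$, so every subformula of a $\Snc^>$-formula $P$ whose principal operator is $\Snc^>$ is vacuously false at $0$; hence $\pi_\sigma^0\models P$ iff $\pi_\sigma^0\models\widehat P$, where $\widehat P$ replaces every maximal $\Snc^>$-subterm of $P$ by $\perp$. Now $\widehat P$ is a Boolean combination of propositional atoms (legitimate CLTL($\D$) atoms already) and of arithmetic constraints, and the latter may still mention $\Y$-shifted a.t.t.'s. This is the delicate point, and the step I expect to be the main obstacle: at instant $0$ an a.t.t.\ $\Y^k x$ denotes the pre-initial value $I(-k,x)$, which a strictly-future formula cannot reach. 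I would remove the obstruction by a preliminary normalisation of constraints, pushing outermost past shifts off the a.t.t.'s via the global identity
\[
R(\varphi_1,\dots,\varphi_n)\;\equiv_g\;\Y^k\,R(\X^k\varphi_1,\dots,\X^k\varphi_n)
\]
with $k$ chosen so that every residual a.t.t.\ has non-negative depth; the emerging $\Y^k$ is then a genuine temporal operator, which the separation step relocates into the past component, leaving $\widehat P$ with future-shifted a.t.t.'s only, i.e.\ a CLTL($\D$) formula. The one thing demanding attention is the left-boundary behaviour of this identity (when $i+k<0$), which must be absorbed by a case split on $\Y^k\top$: the temporal skeleton collapses cleanly at instant $0$, but the arithmetic layer forces careful bookkeeping of pre-initial valuations.

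Finally I would assemble the pieces: replacing each $P_j$ in $\Phi$ by an initially-equivalent $\U^>$-formula $F_j'$ gives $\phi \equiv_g \Phi \equiv_i B(F_1,\dots,F_p,F_1',\dots,F_q')$, whose only temporal operator is $\U^>$; eliminating $\U^>$ through $\X\phi\equiv_g\perp\,\U^>\phi$ and $\phi\,\U^>\psi\equiv_g\X(\phi\,\U\,\psi)$ turns this into a genuine CLTL($\D$) formula $\chi$ with $\phi\equiv_i\chi$, which together with the first part completes the argument.
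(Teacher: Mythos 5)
Your skeleton is essentially the paper's: for the negative half, exhibit a past formula with no globally equivalent pure-future formula (your $\Y\top$ argument on an everywhere-constant structure is valid, and in fact more self-contained than the paper's appeal to $\top\Snc A$ and \cite{E90}); for the positive half, rewrite with strict operators, apply Gabbay's separation theorem treating each atomic constraint as a propositional letter, and observe that $\Snc^>$-rooted subformulae are vacuously false at instant $0$ (your ``replace every maximal $\Snc^>$-subterm by $\perp$'' is a careful rendering of what the paper states as ``any $\Snc^>$-formula is trivially initially equivalent to false''), then translate the residual $\U^>$-formula back into $\X,\U$.

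The genuine problem is the extra normalisation step you insert to purge $\Y$-shifted a.t.t.'s. First, the displayed ``global identity'' $R(\varphi_1,\dots,\varphi_n)\equiv_g\Y^k R(\X^k\varphi_1,\dots,\X^k\varphi_n)$ is false at every instant $i<k$: there the right-hand side is false (the semantics of $\Y$ forces $i>0$, iterated $k$ times), while the left-hand side can hold via $I$ or early $\sigma$-values; e.g.\ at $i=0$, $R(\Y x)$ asserts $I(-1,x)\in R$ whereas $\Y R(x)$ is false. Second, the proposed repair by a case split on $\Y^k\top$ just reinstates the original constraint in the $\neg\Y^k\top$ branch, which is exactly the branch that survives at instant $0$, so nothing is gained. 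Third, and decisively, the sub-goal is unattainable: a constraint whose truth at instant $0$ depends on a pre-initial value $I(j,x)$ with $j<0$ cannot be initially equivalent to any formula whose a.t.t.'s are future-shifted only, because the latter's truth at $0$ is determined by $\pi$, $\sigma$, $L$ on instants $\geq 0$ and is independent of $I$ (vary $I(-1,x)$ alone to separate them). The theorem as the paper intends it does not require this step: the target class (called CLTLF($\D$) later in the paper) forbids past operators only at the formula level, while a.t.t.'s --- including $\Y$-shifted ones --- are kept ``as is'' and treated as atomic during separation; this is explicit in the paper's proof remark and in the subsequent example, where $p := x=\Y y+1$ appears inside a pure-future formula. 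Drop the normalisation step, read ``CLTL($\D$)'' in that sense, and the rest of your argument goes through and coincides with the paper's proof.
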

\begin{proof}[Proof sketch.]
We first prove that CLTL($\D$) is not globally equivalent to CLTLB($\D$) by
providing a counterexample. Formula $\top
\Snc A$, where $A\in AP$, was shown in \cite{E90} to have no
globally equivalent PLTL formula. Now, suppose $\phi$ is a CLTL($\D$)
formula globally equivalent to CLTLB($\D$) formula $\top \Snc A$. Then, for the above reason, it
should constrain
at least one of its arithmetic variables, by a non trivial arithmetic
formula. Since $\top
\Snc A$ does not constrain any arithmetic variables, some of its models cannot be models of
$\phi$.


To prove the initial equivalence we suppose each formula is written
using only $\U^>$ and $\Snc^>$ operators, using the equivalences above.
From Gabbay's Separation Theorem such a formula can be rewritten
to a separated CLTLB($\D$) formula which is a Boolean combination of
$\Snc^>$- and $\U^>$-formulae.
The proof is concluded by noticing that any $\Snc^>$-formula
is trivially initially equivalent to false.
\end{proof}
\begin{corollary}
Let $\D=\langle D,\Pi\rangle$ be a constraint system where $\Pi$ contains equality and a binary
relation $R$ such that $(D,R)$ is a DAG; then, satisfiability of CLTLB($\D$) is undecidable.
\end{corollary}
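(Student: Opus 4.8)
The plan is to derive the statement from the undecidability of CLTL($\D$)-satisfiability recalled in Section~\ref{sec:undec}, by means of a trivial reduction. Theorem~\ref{thm: init_glob_equiv} already relates CLTLB($\D$) and CLTL($\D$), but for undecidability only the easy direction is needed: CLTL($\D$) is a syntactic fragment of CLTLB($\D$).

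First I would observe that a CLTL($\D$) formula is precisely a CLTLB($\D$) formula built without the past connectives $\Y$, $\Snc$ and whose arithmetic temporal terms never apply $\Y$. For such a $\phi$ one has $\lfloor\phi\rfloor = 0$, so in any linear time structure $\pi_\sigma = (S, s_0, I, \pi, \sigma, L)$ the assignment $I$ has empty domain and is irrelevant, and none of the semantic clauses for the remaining connectives ever refers to an instant $<0$. Hence the CLTLB($\D$) semantics, restricted to these formulae, coincides clause by clause with the standard CLTL($\D$) semantics, so a future-only formula is CLTLB($\D$)-satisfiable iff it is CLTL($\D$)-satisfiable. The identity map is therefore a many-one reduction of CLTL($\D$)-satisfiability to CLTLB($\D$)-satisfiability.

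It then remains to check that the hypotheses of the corollary fall within the conditions under which CLTL($\D$)-satisfiability is undecidable (Section~\ref{sec:undec}, following \cite{DD02}): equality together with the binary relation $R$ whose graph is a DAG supplies the ``intrinsic counting mechanism'' exploited there, since it permits expressing a successor-like constraint $y = z+1$ and thus encoding the recurrence problem for Minsky machines as done for CLTL~\cite{CC00}. Composing this with the reduction above yields undecidability of CLTLB($\D$)-satisfiability.

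The bookkeeping is straightforward; the two points deserving some care are (i) confirming that passing from CLTL structures to the richer structures $\pi_\sigma$ does not enlarge the set of satisfiable future-only formulae --- which is exactly what the emptiness of $I$ and the ``no negative instant'' observation ensure --- and (ii) matching the DAG hypothesis stated here with the one actually used in \cite{DD02}, phrased there with $R$ moreover \emph{injective}; one should therefore either include injectivity in the statement or argue that it may be assumed without loss of generality. Point (ii) is the only step involving more than a routine check.
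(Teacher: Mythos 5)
Your reduction is exactly the argument the paper intends: the corollary is presented as an immediate consequence of the results recalled in Section~\ref{sec:undec} (the undecidability of CLTL($\D$) from \cite{DD02}), using the fact that CLTL($\D$) is a syntactic fragment of CLTLB($\D$) whose semantics is unaffected by the extra components ($I$, past operators) of the richer structures, and the paper supplies no further proof. Your point (ii) is a fair remark about the statement rather than the proof: the paper's own recollection of \cite{DD02} requires $R$ to be injective, a hypothesis silently dropped in the corollary's wording.
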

In the following, in the case of a decidable constraint system $\D$, we prove the decidability of the satisfiability and the
model checking problems for CLTLB($\D$) formulae for
partial $\D$-valuations, in which that for all computations
the value of counters will be considered only for a fixed number of
steps.
The counting mechanism of $\D$ is not altered along finite paths by
means of constraints imposing periodicity of values of variables and all relations
are still considered over infinite, possibly periodic, paths.
This allows us to define a
complementary approach to the one of \cite{dMRS02}, aimed at bounded satisfiability checking
\cite{PMP08} 
and BMC of
infinite-state systems.
With this assumption, any periodic behavior which induces a
finite, even periodic, prefix of values of variables ruled by the counting mechanism and
satisfying a CLTLB($\D$) formula, can be represented.
An arithmetic variable varying over a bounded set may still be
represented by its Boolean representation and be part of the
propositional infinite paths.
It is worth noticing that, since we limit the counting mechanism along
finite paths, the partial model is an under-approximation, due to the intrinsic
undecidability of the general problem.
\begin{definition}
Let $\phi$ be a CLTLB($\D$) formula and $k \in \N$, then a
\emph{k-partial} $\D$-valuation $\sigma_k$ for $\phi$ is a relation in
$\{i \in \Z \mid i \geq \lfloor \phi\rfloor \} \times V \times D$ with the condition
that for each variable $x$ occurring in $\phi$, its restriction over
$\{i \in \Z \mid \lfloor \phi \rfloor_x \leq i \leq k + \lceil
\phi\rceil_x\}\times \{x\} \times D$
is a function from
$\{i \in \Z \mid \lfloor \phi \rfloor_x \leq i \leq k + \lceil
\phi\rceil_x\}\times \{x\}$ to $D$. Then, $\Sigma_k$ is the set of all $k$-partial $\D$-valuations for $\phi$.
\end{definition}
Informally, $\sigma_k$ defines a unique value for each
counter $x$ from $0$ up to the bound $k$ by means of boundaries
conditions in the intervals $\{i\in\Z\mid \lfloor \phi \rfloor_x\leq i< 0\}$ and $\{i\in\Z\mid k<i\leq k+\lceil \phi \rceil_x\}$, and it accounts for relations over infinite, even periodic, paths, after $k$.
For the case of $k$-partial $\D$-valuation one can define a
semantics of CLTLB($\D$) formulae.
It coincides with the semantics of the
(full) $\D$-valuations except for the case of arithmetic relations
$R$; namely:
\begin{equation}\label{eq:rel-over-partial}
\begin{array}{c}
\pi_{\sigma_k}^i \models R(\varphi_1, \dots, \varphi_n) \iFF \\
\forall y_1, \dots, y_n \in D \text{ s.t. } \forall 1\leq j\leq n, 
(i+|\varphi_j|, x_{\varphi_j}, y_j) \in \sigma_k \\
\text{then } (y_1, \dots, y_n) \in R,
\end{array}
\end{equation}
where $x_{\varphi_j}$ is the variable that appears in $\varphi_j$.
If $\sigma_k$ is a function, this semantics reduces
exactly to the previous one.
The satisfiability problem for a CLTLB($\D$) formula $\phi$ over
$k$-\emph{partial} $\D$-valuations is that of looking for a (partial) linear
time structure $\pi_{\sigma_k}=(S, s_0, \pi,\sigma_k, L)$ such that $\pi_{\sigma_k}^0 \models
\phi$.
It is worth noticing that the initialization function $I$ is implicit in the definition of $\sigma_k$.
\begin{theorem}
The satisfiability of a CLTLB($\D$) formula $\phi$ over
$k$-partial $\D$-valuations is decidable when $\D$ is decidable.
\end{theorem}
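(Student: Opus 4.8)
The plan is to reduce satisfiability over $k$-partial $\D$-valuations to a finite combination of a propositional (PLTLB) satisfiability check and a single $\D$-satisfiability query, both of which are decidable by hypothesis. The starting point is to separate the two ``layers'' of a CLTLB($\D$) formula $\phi$: the purely temporal/propositional skeleton, and the finitely many arithmetic constraints instantiated over a bounded window. First I would observe that, since $\sigma_k$ forces a \emph{unique} value on each variable $x$ only for indices in $\{i \mid \lfloor\phi\rfloor_x \le i \le k+\lceil\phi\rceil_x\}$, and every a.t.t. occurring in $\phi$ has depth bounded by $\lfloor\phi\rfloor$ and $\lceil\phi\rceil$, the evaluation of any arithmetic relation $R(\varphi_1,\dots,\varphi_n)$ at any instant $i$ with $0\le i\le k$ only ever refers to the values $\sigma_k(j,x)$ for $j$ in exactly that bounded window. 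Hence the relevant arithmetic data is a finite set of fresh first-order variables $\{z_{j,x} : \lfloor\phi\rfloor_x \le j \le k+\lceil\phi\rceil_x,\ x\in V\}$ ranging over $D$.

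The second step is to build the propositional abstraction. Introduce a fresh atomic proposition $p_{R,i}$ for each arithmetic atom $R(\varphi_1,\dots,\varphi_n)$ of $\phi$ and each instant $i$ in the bounded range; replace each such atom in $\phi$ by the corresponding $p_{R,i}$, obtaining a PLTLB formula $\widehat\phi$ over $AP$ together with these new propositions. Because a model of $\phi$ over $k$-partial valuations is sought only at instant $0$, and the arithmetic atoms can only be ``triggered'' at instants reachable within the bounded window from $0$ — more precisely, since after instant $k$ the semantics (\ref{eq:rel-over-partial}) with an empty set of witnesses makes $R(\dots)$ vacuously true — one can argue that it suffices to guess the truth values of the $p_{R,i}$ for $i\le k$ and let them be forced true beyond. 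Concretely I would quantify: $\phi$ is satisfiable over $k$-partial $\D$-valuations iff there exists a truth assignment $\mu$ to the finitely many $p_{R,i}$ such that (i) $\widehat\phi$ (with the $p_{R,i}$ pinned by $\mu$ on the window and set to $\top$ afterwards) is PLTLB-satisfiable, and (ii) the finite first-order formula $\bigwedge_{R,i}\bigl(\mu(p_{R,i}) \leftrightarrow R(z_{i+|\varphi_1|,x_{\varphi_1}},\dots,z_{i+|\varphi_n|,x_{\varphi_n}})\bigr)$ is satisfiable in $\D$. Both checks are decidable — PLTLB satisfiability is decidable (indeed PSPACE), and $\D$-satisfiability is decidable by assumption — and there are only finitely many assignments $\mu$, so the whole procedure terminates.

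The two directions of the iff are then routine but must be stated carefully. For soundness (a partial model yields such a $\mu$): given $\pi_{\sigma_k}$ with $\pi_{\sigma_k}^0\models\phi$, set $\mu(p_{R,i})$ to the actual truth value of $R(\dots)$ at $i$ under $\sigma_k$, set $z_{j,x}:=\sigma_k(j,x)$, and check that (\ref{eq:rel-over-partial}) combined with $\sigma_k$ being a function on the window gives exactly condition (ii), while the labelled path underlying $\pi_{\sigma_k}$ with the $p_{R,i}$ added witnesses (i). For completeness, from a PLTLB model of $\widehat\phi$ and a $\D$-assignment to the $z_{j,x}$ one reconstructs $\sigma_k$ by reading off the $z_{j,x}$ on the window and extending arbitrarily (the extension is irrelevant since relations beyond $k$ are vacuous under (\ref{eq:rel-over-partial}) when no witness exists — or, if one prefers the alternate reading, one extends the $z$'s to a full function, which is always possible, without affecting instants $\le k$). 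The main obstacle I anticipate is getting the bookkeeping of indices and the treatment of instants beyond $k$ exactly right: one must verify that no arithmetic atom evaluated at an instant $>k$ can depend on a window variable in a way that constrains the guess $\mu$, so that the ``pin on window, $\top$ afterwards'' abstraction of the $p_{R,i}$ is faithful; this hinges on the depth bounds $\lceil\phi\rceil_x$, $\lfloor\phi\rfloor_x$ being precisely what $\Sigma_k$ pins down, and on the vacuity clause in (\ref{eq:rel-over-partial}). Everything else is a straightforward compositional induction on $\phi$.
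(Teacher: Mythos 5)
There is a genuine gap, and it is exactly at the point you flag as ``the main obstacle'': the treatment of instants after $k$. Your abstraction pins the atoms $p_{R,i}$ only for $0\le i\le k$ and sets them to $\top$ for $i>k$, on the grounds that beyond $k$ the universal clause in (\ref{eq:rel-over-partial}) can be made vacuous. But a $k$-partial valuation is forced to be a \emph{function} on the whole border window, up to $k+\lceil\phi\rceil_x$ (and down to $\lfloor\phi\rfloor_x$), so an atom evaluated at an instant $i>k$ whose referenced positions $i+|\varphi_j|$ still fall inside that window is \emph{not} free: its truth value is determined by the very values your $\D$-query already constrains. Concretely, take $\phi := (\X^2 x = 1) \wedge \neg(\top\U\neg(x=0))$ and $k=1$. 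Here $\lceil\phi\rceil_x=2$, so $\sigma_k$ is a function on positions $0,\dots,3$; the conjunct $\mathbf{G}(x=0)$ evaluated at instant $2$ forces $\sigma_k(2,x)=0$, which contradicts $\sigma_k(2,x)=1$ forced by the first conjunct at instant $0$, so $\phi$ is unsatisfiable over $1$-partial valuations. Your procedure, however, only collects arithmetic constraints for $i\le 1$ (namely $z_{0,x}=z_{1,x}=0$ and $z_{2,x}=1$, which are consistent) and pins $p_{x=0,i}=\top$ for $i>1$, so both checks pass and it wrongly answers ``satisfiable''. The converse direction breaks too: $\mathbf{G}\mathbf{F}\neg(x=0)$ \emph{is} satisfiable over $k$-partial valuations, because beyond the window $\sigma_k$ is an arbitrary relation and one may insert tuples $(i,x,1)$ at infinitely many instants, falsifying $x=0$ there; but with $p_{x=0}$ pinned to $\top$ at every $i>k$ the propositional check fails for every $\mu$, so your procedure wrongly answers ``unsatisfiable''. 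In short, vacuity is something the model-builder may \emph{choose} outside the border, not something the semantics \emph{imposes}, and inside the border (instants in roughly $(k,\,k+\lceil\phi\rceil-\lfloor\phi\rfloor]$) the atoms are still arithmetically constrained; the claimed verification that ``no arithmetic atom evaluated at an instant $>k$ can depend on a window variable'' is false whenever a variable has occurrences of different depths or a.t.t.'s use $\Y$.

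The overall shape of your reduction (a finite propositional skeleton plus a single decidable $\D$-consistency query) is in the spirit of the paper's argument, but the paper handles the infinite suffix differently: it first reduces to a future-only formula via the initial-equivalence theorem, builds the B\"uchi automaton over an alphabet of constraint conjunctions, and then asks for a length-$k$ prefix that (i) is arithmetically consistent \emph{including the border positions} and (ii) can be extended into a reachable accepting strongly connected component, with the valuation completed by the empty relation only \emph{outside} the window. To repair your proof you would have to do something analogous: extend the guessed assignment $\mu$ and the $\D$-query over all instants whose atoms can still reach window positions, treat atoms strictly beyond that extended border as genuinely free propositional letters (any truth-value pattern is realizable there, true via an empty relation and false via a violating witness tuple), and keep an $\omega$-regular (loop/automaton-based) propositional check rather than the blanket ``$\top$ afterwards'' pinning.
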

\begin{proof}[Proof sketch.]
Thanks to the initial equivalence of CLTLB($\D$) and CLTLF($\D$)
formulae (Theorem \ref{thm: init_glob_equiv}), we assume 
without loss of generality that
$\phi \in$ CLTLF($\D$); moreover, we assume that a.t.t.'s do not appear negated (i.e., negated a.t.t.'s are transformed into the positive form of the complement relation) and that constraints in
$\phi$ are in disjunctive normal form (i.e., disjunction of
conjunction of propositions and a.t.t.'s). 
Let $\mathcal{C}$ be the set containing all conjunctions of such terms, and let $\mathcal{A}_\phi$ be the
corresponding B\"uchi
automaton whose alphabet is $A=\mathcal{P}(\mathcal{C})$.
The satisfiability of $\phi$ is reduced to the emptiness of
$L(\mathcal{A}_{\phi})$.
In fact, if $L(\mathcal{A}_{\phi})$ is empty, then $\phi$ is unsatisfiable.
If $L(\mathcal{A}_{\phi})$ is not empty, then $\mathcal{A}_{\phi}$ has one or 
more strongly connected components that are reachable from an
initial state and contain a final state. Hence, it is enough to check if there exists a path
of length $k$ from the initial state (which also considers the initial values of the
variables) that can be extended to one of the above components and which satisfies each constraint.
This is decidable, because the consistency problem of $\D$ is decidable.
Finally, it can be shown that the finite sequence of variable assignments appearing in such a path of length $k$ can be extended to a $k$-partial $\D$-valuation on which $\phi$ is satisfied, for example by using the empty relation outside those instants in which the valuation is required to be a function.
\end{proof}

Section \ref{sec:encoding} computes an estimation of the complexity
of problem for a large class of constraint system.

 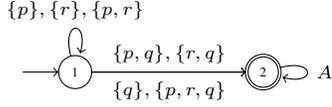
\begin{figure}\centering
\begin{tikzpicture}[node distance=5cm,auto,
                    every state/.style={draw=black!100,thin}]

  \node[state, initial by arrow, initial text={}, scale=0.5] (q_0) {1};
  \node[state, accepting, scale=0.5] (q_1) [right of=q_0] {2};

  \path[->] (q_0) edge [loop above] node {{\scriptsize $\{p\},\{r\},\{p,r\}$}} (q_0)
            (q_0) edge [] node[above] {\scriptsize{$\{p,q\},\{r,q\}$}} (q_1)
            (q_0) edge [] node[below] {\scriptsize{$\{q\}, \{p,r,q\}$}} (q_1)
            (q_1) edge [loop right] node {{\scriptsize $A$}} (q_1);

\end{tikzpicture}
\caption{B\"uchi automaton for $(p\vee r)\U q$, with $p := x = \Y y + 1$, $r:= y = x+2$ and $q := y \leq \X^2 x \wedge x < \X x$.}
\label{buchiA_fig}
\end{figure}
As an illustrative example, consider the satisfiability of the formula $\phi :=
(p\vee r) \U q$ where $p :=x = \Y y + 1$, $r:= y = x + 2$ and $q := y \leq
\X^2 x \wedge x < \X x$ and let be $k = 3$.
The emptiness problem reduces to finding a consistent
assignment to $x$ and $y$ along a path of length $3$ over the
B\"uchi automaton $\mathcal{A}_\phi$ on the alphabet $A =
\{\varnothing,\{p\},\{q\},\{r\},\{p,r\},\{p,q\},\{r,q\},\{p,q,r\}\}$ shown in
Fig. \ref{buchiA_fig}.
Actually, we need to check
the consistency for at least one prefix of length 3 of $L(\mathcal{A}_\phi)$.
In Fig. \ref{constr_fig} we show the corresponding graph of the
constraints to be solved for the word $\{p,r,q\}\{p\}\{p,q\}$.
A dashed line means that the constraint in the label does not hold,
numbers in the circles are possible assignments to the variables, while a
blank means that the corresponding value is irrelevant, and can be left undefined.
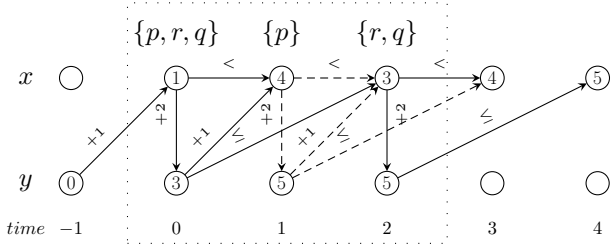
\begin{figure}\centering
\begin{tikzpicture}[node distance=4cm,auto,
                    every state/.style={draw=black!100,scale=0.35}, >=stealth]

  \draw [loosely dotted] (0.75,-2.2) rectangle (5,1);

  \node[state] (x_!) {};
  \node[state] (x_0) [right of=x_!] {{\huge $1$}};
  \node[state] (x_1) [right of=x_0] {{\huge $4$}};
  \node[state] (x_2) [right of=x_1] {{\huge $3$}};
  \node[state] (x_3) [right of=x_2] {{\huge $4$}};
  \node[state] (x_4) [right of=x_3] {{\huge $5$}};

  \node[state] (y_!) [below of=x_!] {{\huge $0$}};
  \node[state] (y_0) [below of=x_0] {{\huge $3$}};
  \node[state] (y_1) [below of=x_1] {{\huge $5$}};
  \node[state] (y_2) [below of=x_2] {{\huge $5$}};
  \node[state] (y_3) [below of=x_3] {};
  \node[state] (y_4) [below of=x_4] {};

  \node[] (x) [left of=x_!, node distance=0.6cm] {$x$};
  \node[] (y) [left of=y_!, node distance=0.6cm] {$y$};
  \node[] (tempo) [below of=y, node distance=0.6cm,  scale=0.7] {$time$};

  \node[] (-1) [below of=y_!, node distance=0.6cm, scale=0.7] {$-1$};
  \node[] (0) [below of=y_0, node distance=0.6cm, scale=0.7] {$0$};
  \node[] (1) [below of=y_1, node distance=0.6cm, scale=0.7] {$1$};
  \node[] (2) [below of=y_2, node distance=0.6cm, scale=0.7] {$2$};
  \node[] (3) [below of=y_3, node distance=0.6cm, scale=0.7] {$3$};
  \node[] (4) [below of=y_4, node distance=0.6cm, scale=0.7] {$4$};

  \node[] (pq) [above of=x_0, node distance=0.6cm] {$\{p,r,q\}$};
  \node[] (p) [above of=x_1, node distance=0.6cm] {$\{p\}$};
  \node[] (pq2) [above of=x_2, node distance=0.6cm] {$\{r,q\}$};

  \path[->] (x_0) edge [] node {{\tiny$<$}} (x_1)
            (y_0) edge [] node[above, rotate=24, pos=0.3] {{\tiny$\leq$}} (x_2)
            (x_2) edge [] node {{\tiny$<$}} (x_3)
            (y_2) edge [] node[above, rotate=24] {{\tiny$\leq$}} (x_4)

            (x_1) edge [densely dashed] node {{\tiny$<$}} (x_2)
            (y_1) edge [densely dashed] node[above, rotate=24, pos=0.3] {{\tiny$\leq$}} (x_3)

            (y_!) edge [] node[above, rotate=45, pos=0.3] {\tiny$+1$} (x_0)
            (x_0) edge [] node[above, rotate=90, pos=0.3] {\tiny$+2$} (y_0)
            (y_0) edge [] node[above, rotate=45, pos=0.3] {\tiny$+1$} (x_1)
            (x_1) edge [densely dashed] node[above, rotate=90, pos=0.3] {\tiny$+2$} (y_1)
            (x_2) edge [] node[below, rotate=90, pos=0.3] {\tiny$+2$} (y_2)
            (y_1) edge [densely dashed] node[above, rotate=45, pos=0.3] {\tiny$+1$} (x_2);

\end{tikzpicture}
\caption{Constraint graph of $\{p,r,q\}\{p\}\{r,q\}$.}
\label{constr_fig}
\end{figure}

So far, we neglected any initialization condition, solving a \emph{general}
satisfiability problem. 
If a formula is shown to be unsatisfiable, then there
is no prefix of an infinite model $\pi_\sigma$, of length
equal to $k$, satisfying the formula.

\section{Bounded Reachability Problem}\label{sec:brp}
%
This section studies the bounded satisfiability of CLTLB($\D$) formulae
by using a finite representation of infinite models.
It is then shown that this entails 
the satisfiability of the same formula with respect to $k$-\emph{partial}
$\D$-valuations.
Finally, the section introduces the  Bounded (existential)
Reachability Problem (BRP) for Kripke structures, showing that BRP also admits a complete 
procedure.

First, we need to define a bounded semantics, i.e., a semantics of a formula on finite structures.
Let $k>0$, let $\phi$ be a CLTLB($\D$) formula and let
$\widehat{\sigma}_k: \{i\in\Z\mid
\lfloor\phi \rfloor_x \leq i\leq k + \lceil \phi\rceil_x\} \times
  \{x\} \to \D$, for each $x \in V$, called a {\em local
sequence}, be a finite sequence of assignements to variables in $V$. Informally, sequence
$\widehat{\sigma}_k$
is not only defined between instants 0 and $k$, but it is bordered by two segments defining
variable values before $0$ and after $k$, as shown also in
Fig. \ref{constr_fig}. This is necessary to correctly define the value of all a.t.t's
in the interval from $0$ to $k$; in fact,
the evaluation of an a.t.t. may involve also a bounded number of instants before instant 0 or after
instant $k$. 
Let $\pi\in S^+$, called a finite path. A finite path is {\em cyclic} if it is of the form $usvs$,
for
some $s\in S$, $u,v\in S^*$. A cyclic finite path can be considered a finite representation of an
infinite one, e.g., $u(sv)^\omega$. If $\pi$ is a cyclic path $usvs$, then a bounded semantics for
$\phi$ over $\pi$ and local assignment $\widehat{\sigma}_k$ is defined as in the case of a
$k$-partial $\D$-valuation of Section~\ref{sec:dec}, by replacing $\sigma_k$ with
$\widehat{\sigma}_k$ and $\pi$ with $u(sv)^\omega$ in (\ref{eq:rel-over-partial}). 
If $\pi$ is not cyclic, instead, the semantics of each relation $R$ is, for $0 \leq i \leq k$:
\begin{equation*}
\begin{aligned}
 \pi_{\widehat{\sigma}_k}^i \models_k R(\varphi_1, \dots, \varphi_n) &\iFF \\
   (\widehat{\sigma}_k^{i+|\varphi_1|}(&x_{\varphi_1}), \dots,
   \widehat{\sigma}_k^{i+|\varphi_n|}(x_{\varphi_n}) ) \in R
\end{aligned}
\end{equation*}
The bounded semantics of temporal operators is the same as the one in \cite{BCCZ99}, e.g.:
\begin{equation*}
\begin{aligned}
 \pi_{\widehat{\sigma}_k}^i \models_k \phi\U\psi &\iFF
 \left\{ \begin{gathered}
   \exists \, i\leq j \leq k: \pi_{\widehat{\sigma}_k}^j \models \psi \, \wedge \\
\pi_{\widehat{\sigma}_k}^n
 \models \phi \quad \forall\, i\leq n < j
 \end{gathered}\right.
\end{aligned}
\end{equation*}
\begin{equation*}
\begin{aligned}
 \pi_{\widehat{\sigma}_k}^i \models_k \phi\R\psi &\iFF
 \left\{
 \begin{gathered}
 \exists \, i\leq j \leq k: \pi_{\widehat{\sigma}_k}^j \models_k \phi \, \wedge \\
 \pi_{\widehat{\sigma}_k}^n \models_k \psi \quad \forall\, i \leq n \leq j
 \end{gathered}\right.
\end{aligned}
\end{equation*}
\begin{equation*}
\begin{aligned}
 \pi_{\widehat{\sigma}_k}^i \models_k \X\phi &\iFF
 \begin{gathered}
   0\leq i+1 \leq k\land \pi_{\widehat{\sigma}_k}^{i+1} \models \phi
 \end{gathered}
\end{aligned}
\end{equation*}

By using the bounded semantics, the following theorem holds:
\begin{theorem}\label{th:BoundedSAT}
For every \emph{CLTLB($D$)} formula $\phi$, 
if, there exist $k>0$, a finite path $\pi$ of length $k$ and a local
assignment $\widehat{\sigma}_k$ such that $\pi_{\widehat{\sigma}_k} \models_k \phi$ then
$\phi$ is satisfiable over $k$-partial $\D$-valuations. 
\end{theorem}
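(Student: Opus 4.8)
The plan is to convert a bounded model (a finite cyclic or acyclic path plus a local sequence $\widehat\sigma_k$) into a $k$-partial $\D$-valuation $\sigma_k$ together with an infinite path, and then show that satisfaction is preserved. The key observation is that the two semantics were deliberately designed to coincide on the "finite core": relations over $\sigma_k$ in~(\ref{eq:rel-over-partial}) use a universally quantified condition that collapses to the ordinary membership test exactly when $\sigma_k$ is a genuine function on the relevant instants, and that is precisely what a local sequence $\widehat\sigma_k$ provides on $\{i \mid \lfloor\phi\rfloor_x \le i \le k + \lceil\phi\rceil_x\}$. So the construction of $\sigma_k$ is: keep $\widehat\sigma_k$ verbatim on its domain, and outside that domain let the relation be empty (the valuation is not required to be a function there). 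Then $\sigma_k$ is a $k$-partial $\D$-valuation by definition.

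First I would split on whether the finite path $\pi$ is cyclic. If $\pi = u s v s$ is cyclic, take the infinite path $\pi' = u(sv)^\omega$; if $\pi$ is acyclic of length $k$, extend it to any infinite path $\pi'$ (e.g. by stuttering the last state, or by appending an arbitrary $S^\omega$ tail — since $\phi \in$ CLTLB($\D$) and, by Theorem~\ref{thm: init_glob_equiv}, may be taken in the future-only fragment, the choice of tail beyond instant $k$ is constrained only through the bounded until/release clauses, which have already been witnessed within $[0,k]$). In both cases the resulting $\pi_{\sigma_k}$ is a legitimate (partial) linear-time structure of the form required by the satisfiability problem over $k$-partial valuations.

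Then I would prove, by structural induction on subformulae $\psi$ of $\phi$, the statement: for every $i$ with $0 \le i \le k$, if $\pi_{\widehat\sigma_k}^i \models_k \psi$ then $\pi_{\sigma_k}^i \models \psi$. The atomic propositional case is immediate from the labeling. The arithmetic-relation case is the crux: under the bounded semantics, $R(\varphi_1,\dots,\varphi_n)$ at $i$ means the concrete tuple $(\widehat\sigma_k^{i+|\varphi_1|}(x_{\varphi_1}),\dots)$ lies in $R$; since all the shifted instants $i+|\varphi_j|$ lie in the interval $[\lfloor\phi\rfloor_{x_{\varphi_j}},\,k+\lceil\phi\rceil_{x_{\varphi_j}}]$ by definition of look-forwards/look-backwards, $\sigma_k$ agrees with $\widehat\sigma_k$ there and is single-valued, so the universally quantified condition in~(\ref{eq:rel-over-partial}) has exactly one instantiation and reduces to the same membership fact. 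Boolean connectives are routine; for $\X$, $\Y$ and the bounded $\U$/$\R$ the bounded semantics only ever refers to instants in $[0,k]$, so the induction hypothesis applies directly to the witnessing $j$ and to all $n$ in the relevant ranges, and in the cyclic case the unfolding $u(sv)^\omega$ reproduces those same states. Applying this at $i=0$ gives $\pi_{\sigma_k}^0 \models \phi$, i.e. $\phi$ is satisfiable over $k$-partial $\D$-valuations.

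**The main obstacle** I expect is bookkeeping rather than conceptual: making the index arithmetic airtight so that every a.t.t. evaluated at an instant $i \in [0,k]$ lands inside the domain on which $\widehat\sigma_k$ and $\sigma_k$ coincide, including the boundary segments before $0$ and after $k$, and verifying that in the cyclic case the bounded operators (whose witnesses are confined to $[0,k]$ by definition) are genuinely faithful to the infinite unfolding $u(sv)^\omega$ — i.e. that no obligation "escapes" past instant $k$. One should also be slightly careful that reducing to the future fragment via Theorem~\ref{thm: init_glob_equiv} is only an \emph{initial} equivalence, so the induction above is stated and used only at instant $0$ for $\phi$ itself (the inner induction on subformulae is fine at all $i \in [0,k]$ because those subformulae are the ones actually present after the rewriting).
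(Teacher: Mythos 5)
Your proposal is correct and follows essentially the same route as the paper's own argument: extend $\widehat{\sigma}_k$ to a $k$-partial $\D$-valuation by leaving values undefined (the empty relation) outside the window $\{i \mid \lfloor\phi\rfloor_x \le i \le k+\lceil\phi\rceil_x\}$, so that the universally quantified condition in~(\ref{eq:rel-over-partial}) collapses to ordinary membership exactly where it matters and is vacuous elsewhere, and then transfer truth from the bounded to the partial semantics. The only cosmetic difference is that the paper delegates the temporal part to \cite{BCCZ99} after abstracting each constraint into a propositional letter, whereas you re-prove that soundness by an explicit structural induction (and your parenthetical appeal to Theorem~\ref{thm: init_glob_equiv} is unnecessary, since the bounded semantics already confines all obligations to the prefix).
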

\begin{proof}[Proof sketch.]
The statement is proven by means of a completion of the sequence
$\widehat{\sigma}_k$ satisfying property (\ref{eq:rel-over-partial}). 
A legal completion may also involve undefined values: constraints encompassed in the loop of
$\pi_{\widehat{\sigma}_k}$ can be suitably bordered.
In particular, if $\pi=uv^\omega$ and $l$ is the length of 
$v$, for each variable $x$ such that $\lceil \phi
\rceil_x>0$, $\forall c \in D$, then $\forall h \geq  0$, $(k+1+hl, x, c) \not \in \widehat{\sigma}_k$.
By exploiting the results in \cite{BCCZ99} and a syntactic rewriting of
each $\D$ constraint with a propositional letter, which results in a formula
$\phi'$, from $\phi$, satisfied by a propositional model $\pi'$, then
$\pi' \models_k \phi'$  implies $\pi' \models \phi'$.
\end{proof}
%


The above concepts can be generalized and extended in the case of $\D$-Kripke
structures,
as suggested in \cite{DD02}.
\begin{definition}
A \textup{$\D$-Kripke structure} is a tuple $M=\langle S, T, C, \lambda\rangle$ with a
finite set of states $S$, a transition relation $T \subseteq S \times
S$ between states, a set $C$ of $\D$ relations on a.t.t.'s and a labeling function $\lambda: S \to 2^{AP}\times C$.
\end{definition}
Given a $\D$-Kripke structure $M$, a CLTLB($\D$) formula $\phi$ and an
initial state $s_0$, the {\em existential} model checking (MC) problem amounts to checking if
there exists a linear structure $\pi_\sigma$ 
such that $\pi_\sigma \models \phi$.
Because of the undecidability results of Section~\ref{sec:undec}, the existential MC problem 
must be redefined for $k$-\emph{partial}
$\D$-valuations in order to have a decidable under-approximation.
Thanks to the well-known representation of Kripke structures through LTL
formulae, and by considering a.t.t.'s in $C$  
as atomic elements, it is possible to obtain a
CLTLB($\D$) formula $\chi_{\textup{M}}$ defining the ``propositional'' description
of the language of $\D$-Kripke structure $M$.
The {\em $k$-partial $\D$-evaluation} model checking problem is defined as the
satisfiability of $\chi_{\textup{M}} \wedge \phi$ over $k$-partial $\D$-evaluations.

Theorem~\ref{th:BoundedSAT} may be strengthened for $\D$-Kripke structures when $\phi$ is a
reachability formula. Formula $\phi$ is a {\em reachability} formula when it is of the
form ${\bf F}\psi$, where $\psi$ is a CLTLB($\D$) formula without temporal operators (which
are allowed only in a.t.t.). Then, the {\bf\em Bounded Reachability Problem} (BRP) for $M$ and
$\phi$ is defined as the existence of $k>0$, a finite path $\pi$ of length $k$ and a local
assignment $\widehat{\sigma}_k$ such that $\pi_{\widehat{\sigma}_k} \models_k \chi_{\textup{M}}
\wedge \phi$.

\begin{corollary}\label{cor:BMC}
For every reachability formula $\phi$ in \emph{CLTLB($\D$)} and for every  $\D$-Kripke
structure $M$, the BRP is equivalent to the $k$-partial $\D$-evaluation MC problem. 
\end{corollary}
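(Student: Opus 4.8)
The plan is to chain together Theorem~\ref{th:BoundedSAT} and the preceding theorem on decidability of satisfiability over $k$-partial $\D$-valuations, instantiated on the conjunction $\chi_{\textup{M}}\wedge\phi$, and then to observe that the reachability shape of $\phi$ removes the asymmetry between the two directions. First I would recall that, by the construction preceding the corollary, the $k$-partial $\D$-evaluation MC problem for $M$ and $\phi$ is by definition the satisfiability of $\chi_{\textup{M}}\wedge\phi$ over $k$-partial $\D$-valuations, and that $\chi_{\textup{M}}\wedge\phi$ is itself a CLTLB($\D$) formula, so all earlier results apply verbatim to it. Thus one direction is immediate: if the BRP holds, i.e. there are $k>0$, a finite path $\pi$ of length $k$ and a local assignment $\widehat{\sigma}_k$ with $\pi_{\widehat{\sigma}_k}\models_k\chi_{\textup{M}}\wedge\phi$, then Theorem~\ref{th:BoundedSAT} yields a $k$-partial $\D$-valuation satisfying $\chi_{\textup{M}}\wedge\phi$, which is exactly a positive instance of the $k$-partial $\D$-evaluation MC problem.

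For the converse I would use the hypothesis that $\phi={\bf F}\psi$ with $\psi$ free of temporal operators (outside a.t.t.'s). Suppose $\chi_{\textup{M}}\wedge{\bf F}\psi$ is satisfiable over some $k$-partial $\D$-valuation $\sigma_k$ on a linear structure $\pi_\sigma$. Then ${\bf F}\psi$ witnesses some instant $j$ at which $\psi$ holds; since $\psi$ has no outer temporal operators, its truth at $j$ depends only on the propositional labels at $j$ and on the finitely many a.t.t.-values in the window $[\,j+\lfloor\phi\rfloor,\,j+\lceil\phi\rceil\,]$. Using the propositional-part emptiness argument from the proof of the decidability theorem, the prefix of the propositional run up to (a bounded extension of) $j$ can be taken to live inside a reachable, ultimately-cyclic lasso of the B\"uchi automaton for $\chi_{\textup{M}}$, and hence be presented as a finite cyclic path $\pi=usvs$ of some length $k'\ge j$; restricting $\sigma_k$ to the bordered interval $\{i\mid\lfloor\phi\rfloor_x\le i\le k'+\lceil\phi\rceil_x\}$ for every $x$ gives a local assignment $\widehat{\sigma}_{k'}$. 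One then checks that, with the cyclic-path clause of the bounded semantics (the one that replaces $\pi$ by $u(sv)^\omega$ in~(\ref{eq:rel-over-partial})), $\pi_{\widehat{\sigma}_{k'}}\models_{k'}\chi_{\textup{M}}\wedge{\bf F}\psi$: the reachability obligation is discharged at the copied instant $j\le k'$, the a.t.t.-windows needed there are entirely inside the bordered domain of $\widehat{\sigma}_{k'}$, and the relational constraints agree with those of $\sigma_k$ by construction. This exhibits a positive instance of the BRP.

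The main obstacle, and the step I would spend the most care on, is the converse direction — specifically, ensuring that truncating an infinite (possibly non-lasso on the arithmetic side) model down to a finite cyclic path does not lose the witness for ${\bf F}\psi$ and does not create a relational constraint inside the loop that the border segments cannot legally satisfy. Here the restriction to reachability formulae is exactly what makes it work: unlike a general CLTLB($\D$) formula, ${\bf F}\psi$ imposes no liveness or invariance obligation beyond a single reachable instant, so one only needs the finite path long enough to reach that instant plus the look-forward $\lceil\phi\rceil$, and the ``legal completion with undefined values'' device from the proof of Theorem~\ref{th:BoundedSAT} (placing no triple $(k'+1+hl,x,c)$ in $\widehat{\sigma}_{k'}$ for variables with $\lceil\phi\rceil_x>0$) takes care of the loop border without contradicting~(\ref{eq:rel-over-partial}). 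I would close by noting that both reductions are effective, so the corollary also transfers decidability from the $k$-partial MC problem to the BRP when $\D$ is decidable.
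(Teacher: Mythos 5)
Your first direction (BRP $\Rightarrow$ $k$-partial MC) is exactly the intended argument: the paper states the corollary without proof as a strengthening of Theorem~\ref{th:BoundedSAT}, and applying that theorem to $\chi_{\textup{M}}\wedge\phi$ is all that is available there. The genuine problem is in your converse, at the step where you pass from the instant $j$ witnessing $\psi$ to a cyclic path of length $k'\ge j$ and claim that ``restricting $\sigma_k$ to the bordered interval $\{i \mid \lfloor\phi\rfloor_x\le i\le k'+\lceil\phi\rceil_x\}$ gives a local assignment $\widehat{\sigma}_{k'}$.'' A $k$-partial $\D$-valuation is only required to be a \emph{function} on $\{i \mid \lfloor\phi\rfloor_x\le i\le k+\lceil\phi\rceil_x\}$; outside that window it is an arbitrary relation, typically empty, precisely so that constraints there hold vacuously under (\ref{eq:rel-over-partial}). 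Consequently, when $k'>k$ (e.g.\ because the witness $j$ lies beyond $k$, or because no loop of $M$ closes within $k$) the restriction you take need not be total, hence need not be a local assignment at all; worse, the constraints of $\chi_{\textup{M}}$ at the positions in $(k,k']$, and the arithmetic part of $\psi$ at $j$, may have been satisfied \emph{only vacuously} in the $k$-partial model, so no actual values exist with which to populate $\widehat{\sigma}_{k'}$ there, and under the bounded semantics (where the local assignment is a total function on the $k'$-window) these constraints become functional and may be unsatisfiable. Your appeal to the ``legal completion with undefined values'' from the proof of Theorem~\ref{th:BoundedSAT} does not repair this: that device builds a partial valuation \emph{from} a local assignment by leaving triples out beyond $k$, whereas a local assignment cannot have holes inside its own window, which is exactly where you need them.

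To make the converse work you must either fix $k'=k$ (so that the restriction of $\sigma_k$ is a function by definition) and then actually argue that a cyclic finite path $usvs$ closing its loop within $k$ exists whose unrolling $u(sv)^\omega$ still satisfies $\chi_{\textup{M}}\wedge\mathbf{F}\psi$ under (\ref{eq:rel-over-partial}) --- this path-surgery on the propositional side, keeping the arithmetic window untouched and possibly relocating the $\mathbf{F}$-witness into the (vacuous) region beyond $k+\lceil\phi\rceil$, is where the real content of the corollary lies and is not supplied by your sketch --- or explicitly analyse how vacuous satisfaction beyond the window transfers from the $k$-partial model to the cyclic bounded semantics. As it stands, the reduction in your second paragraph conflates the two regimes (functional inside the window, vacuous outside) and the construction of $\widehat{\sigma}_{k'}$ fails exactly at the boundary between them.
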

%

\section{Encoding of the Bounded Reachability Problem}
\label{sec:encoding}

In this section the BRP is encoded as the
satisfiability of a quantifier-free formula in the theory
$\text{EUF}\cup \mathcal{D}$ (QF-UF$\mathcal{D}$), where EUF is the theory of Equality and Uninterpreted
Functions, provided that the set $D$ includes a copy of $\N$ and that
$\text{EUF}\cup \mathcal{D}$ is consistent. The last condition is
easily verified in the case of a union of two consistent, disjoint,
stably infinite theories (as is the case for EUF and arithmetic).
In \cite{BCFPR10} a similar encoding is described for the case of
Integer Difference Logic (DL) constraints: in that case it results to be more succinct and expressive than the
Boolean one: lengthy propositional constraints are substituted
by more concise DL constraints and arithmetic (infinite) domains do
not require an explicit finite representation.
These facts, considering also that the satisfiability problem for the
quantifier-free fragment of $\text{EUF}\cup\text{DL}$ (QF-UFIDL) has the same complexity of SAT, make this approach
particularly efficient, as
demonstrated by the tests outlined in Section \ref{sec:tests}.

Under the above assumption, the proposed encoding is an effective
proof of the decidability of the BRP over $k$-partial $\D$-valuations.
In the general case an estimation of the complexity of the
satisfiability problem (for quantifier-free formulae) can be performed
via the Nelson-Oppen Theorem \cite{O80} as shown in Corollary \ref{cor:compl}.

As discussed before, the BMC problem amounts to looking for a finite
representation of infinite (possibly periodic) paths.
The Boolean approach \cite{BCCZ99} encodes finite paths by means of $2k+3$
propositional variables, while the same temporal behavior can be defined by means of \emph{one} QF-UF$\mathcal{D}$
formula involving only \emph{one} \emph{loop-selecting} variable $\bm{loop} \in D$:
\[
\bigwedge_{i=1}^k \left( (\bm{loop}=i) \Rightarrow L(s_{i-1}) = L(s_k) \right ).
\]
If the value $i$ of variable
$\bm{loop}$ is between $1$ and $k$, then there exists a loop, and it starts at $i$; notice that the formula $\bm{loop}=i$ is well defined since $D$ contains a copy of $\N$.

To encode a.t.t.'s, an \textit{arithmetic formula function}, i.e., an uninterpreted function
$\bm{\tau}: D \to D$, is associated with each arithmetic temporal subterm of
$\Phi$.
Let $\tau$ be such a subterm, then the arithmetic
formula function associated with it (denoted by the same name but written
in boldface), is recursively
defined w.r.t. the sequence of valuations $\sigma$ as:
\[
\begin{array}{c|c}
    \tau & 0 \leq i \leq k
    \\
    \hline
    x & \bm{x}(i) = \sigma^i(x) \\
    \X\alpha & \,\,\bm{\tau}(i) = \bm{\alpha}(i+1)    \\
    \Y\alpha & \,\,\bm{\tau}(i) = \bm{\alpha}(i-1)   \\
\end{array}
\]
If $D$ includes a copy of $\Z$, this semantics is well-defined between $0$ and $k$ thanks to the
initialization function $I$, otherwise we need to consider a \emph{shifted} function $\overline{\sigma}$ such that $\overline{\sigma}(i,\cdot)=\sigma(i-\lfloor\phi\rfloor,\cdot)$.

The propositional encoding is based on the one presented in
\cite{BHJLS06}, which is modified to take also into account relations over
a.t.t.'s.
In the case of the Boolean encoding, the truth value of a PLTLB formula $\Phi$ is
defined w.r.t. the truth value of its subformulae. For each subformula $t$, a set of
Boolean variables $\{t_i\}_{0\leq i \leq k+1}$ is associated with it: if $t_i$ holds, then subformula
$t$ holds at instant $i$.
Instant $k+1$ is introduced to more easily represent the instant
in which the periodic behavior starts.
The truth value of a CLTLB($\mathcal{D}$) formula
$\Phi$ is defined in a similar way.
The QF-UF$\mathcal{D}$ encoding, however, associates with each subformula $\theta$ a
\textit{formula predicate} that is a unary uninterpreted predicate (denoted by the same name but written in boldface)
$\bm{\theta} \in \mathcal{P}(D)$.
When the subformula $\theta$ holds at instant $i$ then $\bm{\theta}(i)$ holds.
As the length of paths is fixed to $k+1$ and all paths start from $0$,
formula predicates are actually subsets of $\{0, \dots, k+1\}$.
Let $\theta$ be a subformula of $\Phi$, $\alpha_1,\dots \alpha_n$ be a.t.t.'s and $R$ be an $n$-ary relation in $\mathcal{D}$; formula predicate
$\bm{\theta}$ is recursively defined as:
\[
\begin{array}{c|c}
  \theta & 0 \leq i \leq k+1 \\
  \hline
  p & \,\,\qquad\bm{\theta}(i) \iFF p \in L(s_i) \\
  R(\alpha_1,\dots,\alpha_n) & \,\,\bm{\theta}(i) \iFF R(\bm{\alpha_1}(i),\ldots ,\bm{\alpha_n}(i))\\
  \neg \phi &  \bm{\theta}(i) \iFF \neg \bm{\phi}(i)\\
  \phi \wedge \psi & \,\,\bm{\theta}(i) \iFF \bm{\phi}(i)
\wedge \bm{\psi}(i)
\end{array}
\]

{\em Temporal subformulae constraints} define the basic temporal
behavior of future and past operators, by using their traditional
fixpoint characterizations:
\[
\begin{array}{c|c}
  \theta & 0 \leq i \leq k \\
  \hline
  \X\phi & \bm{\theta}(i) \iFF \bm{\phi}(i+1) \\
  \phi\U\psi & \bm{\theta}(i)\iFF(\bm{\psi}(i) \vee (\bm{\phi}(i) \wedge
\bm{\theta}(i+1)))\\
  \phi\R\psi & \bm{\theta}(i)\iFF(\bm{\psi}(i) \wedge (\bm{\phi}(i) \vee
\bm{\theta}(i+1)))\\
\end{array}
\]
The encoding for the past operators is analogous to that for future operators except for the instant 0, which must be treated separately (see \cite{BCFPR10}).

{\em Last state constraints} define an equivalence between
truth at point $k+1$ and that at the point indicated by the $\bm{loop}$ variable, since the instant $k+1$ is representative of the instant $\bm{loop}$ along periodic paths.
Otherwise, for non-periodic paths, truth values in $k+1$ are trivially false.
These constraints have a similar structure to the
corresponding Boolean ones, but here
they are defined by only \emph{one} constraint, for each
subformula $\theta$ of $\Phi$,  w.r.t. the variable $\bm{loop}$:
\[
  \begin{array}{l}
    \left(\bigwedge_{i=1}^k (\bm{loop}=i) \Rightarrow
      \left(\bm{\theta}(k+1) \iFF \bm{\theta}(i))\right) \right) \wedge \\
    \left(\left(\bigwedge_{i=1}^k \neg(\bm{loop}=i)\right) \Rightarrow (\neg
\bm{\theta}(k+1))\right).
  \end{array}
\]
Note that if a loop does not exist then the fixpoint semantics of
$\R$ is exactly the one defined over finite acyclic paths in Section \ref{sec:brp}.
To correctly define the semantics of $\U$ and $\R$, their
\emph{eventualities} have to be accounted for.
Briefly, if $\phi\U\psi$ holds at $i$, then $\psi$ eventually holds
in some $j\geq i$; if $\phi\R\psi$ does not hold at $i$, then $\psi$
eventually does not hold in some $j\geq i$.
Along finite paths of length $k$, eventualities must hold between $0$
and $k$.
If a loop exists, an eventuality may hold within the loop.
The original Boolean encoding introduces $k$ propositional variables
for each subformula $\theta$ of $\Phi$ of the form $\phi\U\psi$ or $\phi\R\psi$ (one for each $1 \leq i \leq k$), which represent the eventuality of $\psi$
implicit in the formula, as first defined in \cite{BHJLS06}.
Instead, in the QF-UF$\mathcal{D}$ encoding, only \emph{one} variable $\bm{j_\psi} \in
D$
is introduced for each $\psi$ occurring in a subformula
$\phi\U\psi$ or $\phi\R\psi$.
\[
\begin{array}{c|c}
    \theta & \mathrm{Base} \\
    \hline
    \phi\U\psi &
    \begin{array}{l}
    \left(\bigvee_{i=1}^k \bm{loop}=i\right) \Rightarrow\\
    \qquad \qquad (\bm{\theta}(k) \Rightarrow
     \bm{loop} \leq \bm{j_\psi} \leq k \wedge \bm{\psi}(\bm{j_\psi}))
    \end{array}
    \\
    \phi\R\psi &
    \begin{array}{l}
    \left(\bigvee_{i=1}^k \bm{loop}=i\right) \Rightarrow\\
     \qquad (\neg\bm{\theta}(k) \Rightarrow
     \bm{loop} \leq \bm{j_\psi} \leq k \wedge \neg \bm{\psi}(\bm{j_\psi}))
    \end{array}
  \end{array}
\]
The complete encoding of $\Phi$ consists of the logical
conjunction of all constraints above, together with $\Phi$ evaluated at the
first instant along the time structure.

If $m$ is the total number of subformulae and $n$ is the total number of
temporal operators $\U$ and $\R$ occurring in $\Phi$, then the Boolean
encoding requires $(2k+3) + (k+2)m + (k+1)n = O(k(m+n))$ fresh propositional
variables.
The QF-UF$\D$ encoding
requires only $n+1$ integer variables ($\bm{loop}$ and $\bm{j_\psi}$) and $m$ unary
predicates (one for each subformula).


As previously anticipated, if $\D$ is a consistent, stably infinite theory, $\phi$ is a
formula of length $n$ and $T(n)$ is the
complexity of the satisfiability problem in $\D$ then, by the Nelson-Oppen
Theorem, the satisfiability of a CLTLB($\D$) formula $\phi$
over $k$-partial $\D$-valuations can be solved in $O(2^{n^2}(nk\log{(nk)} +
T(nk)))$; moreover, if $\D$ is convex it can be solved in $O(n^3(nk\log{(nk)} + T(nk)))$.

\begin{corollary}\label{cor:compl}
The satisfiability of a CLTLB($\D$) formula over $k$-partial
$\D$-valuations is \textsc{NP}-complete when $\D$ is
DL, \textsc{P} when $\D$ is RDL (Real DL) and $4$-\textsc{EXPTIME}
when $\D$ is LIA (Linear Integer Arithmetic).
\end{corollary}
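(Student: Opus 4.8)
The plan is to instantiate the general complexity bound, stated just before the corollary as a consequence of the Nelson--Oppen theorem, namely that satisfiability over $k$-partial $\D$-valuations is solvable in $O(2^{n^2}(nk\log(nk) + T(nk)))$ in general and in $O(n^3(nk\log(nk) + T(nk)))$ when $\D$ is convex, and then to plug in the known complexity $T$ of the consistency (satisfiability) problem for each of the three concrete constraint systems DL, RDL and LIA, checking convexity where needed and also establishing the matching lower bounds. First I would recall that RDL (difference constraints over the reals/rationals) is a convex theory whose satisfiability problem is decidable in polynomial time (via shortest-path / Bellman--Ford on the constraint graph), so $T(nk)$ is polynomial; substituting into the convex bound $O(n^3(nk\log(nk)+T(nk)))$ gives an overall polynomial-time procedure, hence membership in \textsc{P}. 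For DL (integer difference logic), the consistency check is again polynomial (integrality does not hurt for pure difference constraints: a feasible rational solution can be shifted to an integral one), but DL is \emph{not} convex, so one is forced into the general bound $O(2^{n^2}(\cdots))$, which is exponential; thus the upper bound we get is only \textsc{NEXPTIME}-ish in this crude form, and I would instead argue \textsc{NP} membership directly: guess the B\"uchi-automaton path of length $k$ (as in the proof of Theorem on decidability over $k$-partial valuations) together with the loop-selector and the eventuality witnesses $\bm{j_\psi}$, and then run the polynomial-time DL consistency check on the resulting conjunction — the whole thing is a polynomial-size certificate verified in polynomial time. For LIA (Presburger without quantifiers, i.e.\ quantifier-free linear integer arithmetic), $T$ is the complexity of QF-LIA satisfiability, which is \textsc{NP}-complete, i.e.\ $T(nk) = 2^{O(\mathrm{poly}(nk))}$; LIA is not convex either, so substituting $T(nk)=2^{O((nk)^c)}$ into $2^{n^2}(nk\log(nk)+T(nk))$ yields a bound of the form $2^{2^{O(\mathrm{poly})}}$, and a more careful accounting of the nested exponentials (the $2^{n^2}$ Nelson--Oppen blow-up composed with the single-exponential QF-LIA solver, itself invoked on inputs of size $nk$) gives the stated $4$-\textsc{EXPTIME} bound.

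For the lower bounds I would proceed as follows. \textsc{NP}-hardness for the DL case is immediate because ordinary propositional satisfiability already embeds into CLTLB($\D$) at a single instant (take $k=0$, no temporal operators, atoms being trivial DL constraints or plain propositions), so the problem is \textsc{NP}-hard, and together with the \textsc{NP} membership above this yields \textsc{NP}-completeness. For \textsc{P}-hardness of the RDL case one reduces from a standard \textsc{P}-complete problem such as Horn-SAT or monotone circuit value, encoding it with propositional atoms and (trivial) RDL constraints at instant $0$; combined with the polynomial upper bound this gives \textsc{P}-completeness. For the LIA case, the $4$-\textsc{EXPTIME} lower bound is the delicate part: one needs a reduction showing that the doubly-nested exponential blow-up is unavoidable, presumably by combining the exponential succinctness of LIA constraints (binary-encoded large coefficients let a short formula talk about exponentially many values) with the exponential cost of the loop/automaton construction, so that together they force a $4$-fold exponential. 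I would likely cite or adapt an existing hardness result for Presburger-constrained temporal logics rather than build this reduction from scratch in the main text.

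The main obstacle I expect is the LIA case, on both ends: on the upper-bound side, getting the constant $4$ exactly right requires being careful about \emph{which} quantity each exponential is a function of — the Nelson--Oppen combination contributes a $2^{n^2}$ factor in the formula size $n$, but the arithmetic sub-solver is called on a formula of size $\Theta(nk)$ whose coefficients may themselves be exponentially large after the encoding of Section~\ref{sec:encoding}, and QF-LIA is itself \textsc{NP}-complete, so one has to layer ``$k$ (or its binary representation) $\to$ exponential'', ``automaton construction $\to$ exponential'', ``Nelson--Oppen $\to$ exponential in formula size'', and ``QF-LIA decision $\to$ exponential'' and check they genuinely compose to four and not three or five; on the lower-bound side, exhibiting a family of formulas that is provably not decidable in $3$-\textsc{EXPTIME} is the real work. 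By contrast, the DL and RDL parts are routine: polynomial-time graph algorithms for consistency, convexity of RDL, and trivial propositional hardness reductions. So in the write-up I would state the DL and RDL parts with a one-line justification each and devote the remaining space to carefully tracking the tower of exponentials in the LIA case, referring to the bound displayed immediately before the corollary for the upper bound and to known succinctness/hardness results for the lower bound.
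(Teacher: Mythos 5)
Your treatment of DL and RDL is essentially the paper's own (implicit) argument: the paper obtains \textsc{NP}-completeness for DL from the fact that the polynomial-size encoding of Section~\ref{sec:encoding} lands in QF-UFIDL, which it has already noted ``has the same complexity of SAT'' (your guess-the-path-plus-Bellman--Ford check is the same argument in different clothes), and it obtains \textsc{P} for RDL by instantiating the convex Nelson--Oppen bound $O(n^3(nk\log(nk)+T(nk)))$ with a polynomial $T$. One internal warning, though: your \textsc{NP}-hardness reduction for DL (propositional SAT at instant $0$ with trivial constraints) applies verbatim to RDL, so it sits in direct tension with the \textsc{P} claim you defend two lines earlier; the paper never attempts hardness for RDL or LIA and simply reads those two items as instantiations of the displayed deterministic bounds, which are bounds for the combination procedure, not completeness claims.

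The genuine gap is the LIA case, on both of the points you flag as delicate. First, the statement asserts only membership in $4$-\textsc{EXPTIME} (exactly as ``\textsc{P}'' asserts membership for RDL), so your plan to establish a $4$-\textsc{EXPTIME} lower bound is not required by the statement and has no basis: there is no existing hardness result of that kind to ``cite or adapt,'' and your own upper-bound analysis (QF-LIA in \textsc{NP}) would in fact place the problem far below $4$-\textsc{EXPTIME}, making such a lower bound false-looking rather than merely hard. Second, your upper-bound arithmetic does not go through: with $T(nk)=2^{\mathrm{poly}(nk)}$, the bound $2^{n^2}(nk\log(nk)+T(nk))$ is $2^{n^2+\mathrm{poly}(nk)}$, i.e.\ single-exponential, not $2^{2^{O(\mathrm{poly})}}$, and no ``more careful accounting'' of those same ingredients can manufacture four exponential levels. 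The intended reading is the straightforward instantiation of the non-convex bound with $T$ taken as the \emph{deterministic} time complexity of deciding LIA/Presburger satisfiability, which is triply exponential (Oppen's $2^{2^{2^{O(n)}}}$); the extra $2^{n^2}$ Nelson--Oppen factor is then absorbed, generously, as at most one further exponential level, yielding the stated $4$-\textsc{EXPTIME} upper bound and nothing more. In short: replace your QF-LIA-based accounting and the lower-bound programme by this single instantiation, and drop the completeness reading of the LIA (and RDL) items.
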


\section{Experimental Results} \label{sec:tests}

The encoding presented in Section \ref{sec:encoding} for CLTLB(DL) has been implemented as a
plugin of the Zot tool\footnote{Zot is available at {\em home.dei.polimi.it/pradella}.}.
This implementation exploits SMT solvers as verification engines, and in
particular it is based on the SMT-LIB \cite{RT06} to achieve independence from
the particular SMT solver used\footnote{As SMT solvers we used both Yices
({\em yices.csl.sri.com}) and Z3
 ({\em research.microsoft.com/\-en-us/\-um/\-redmond/\-projects/z3)}.}.
The Zot plugin has been used to carry out a number of experiments on a variety of examples, old and new.
For the sake of brevity, we do not report here the full experimental
data\footnote{The data are available at {\em home.dei.polimi.it/bersani}.}, and we only briefly summarize them in an informal way.

We carried out two kinds of experiments.
First, we used the new encoding to perform BMC on a set of previously defined PLTLB specifications, to compare the performances of the new Zot plugin w.r.t. the existing SAT-based one presented in \cite{PMS09}.
The SMT-based encoding showed considerable improvements in the vast majority of experiments, for both of the SMT solvers used. The recorded speedup (computed as the ratio $T_{SAT}/T_{SMT}$) was always substantial, and in many cases it was more than 
tenfold (often considerably more than that). For example, we repeated the experiments of \cite{BFPR09} with the new encoding, and the average speedup in the overall verification time was around 2.4 with Z3, and 21.4 with Yices; 
we point out that the gains in performance were particularly significant for the most complex specifications. 

In the second set of experiments we exploited also the new features of CLTLB(DL) w.r.t. PLTLTB, and we used the bounded reachability results presented in Section \ref{sec:brp} to analyze some relevant aspects of non-trivial applications based on the Service-Oriented paradigm \cite{BCFPR10}.
On examples that fall in the range of properties expressible through both CLTLB(DL) and 
PLTLB (e.g., those that involve only bounded domains), the performances of the SMT-based verification are, again, an order of magnitude better than the SAT-based one (the average performance speedup over such properties was 55 with Z3 and 7.4 with Yices).


\section{Conclusions and Future Work} \label{sec:conclusions}
In this paper, we introduced the logic CLTLB($\D$), an extension of PLTLB allowing
as subformulae arithmetic
constraints belonging to a generic constraint system $\D$.
We introduced suitable assumptions
concerning the structure of models, to make satisfiability 
of CLTLB($\D$) decidable,
provided that $\D$ has, in turn, a decidable decision procedure.
In this case, the Bounded Reachability Problem (BRP) for CLTLB($\D$)
formulae can be solved by means of automatic software verification tools.
We built a Bounded Reachability Checker by using SMT-solvers which natively implement 
decision procedures for QF-UF$\D$ when $\D$ is DL or LIA, with very encouraging experimental results.

Future work will compare the new
arithmetic-based encoding with existing Boolean ones by means of a
comprehensive set of tests; we also intend to define
new extensions representing infinite behaviors of variables 
and search for suitable classes of formulae
inducing actual $\omega$-periodic models.
\subsubsection*{Acknowledgments}
Many thanks to Luca Cavallaro for providing stimulating case studies.
This research has been partially funded by the European Commission,
Programme IDEAS-ERC, Project 227977-SMScom, and by the Italian
Government under the project PRIN 2007 D-ASAP (2007XKEHFA).


\bibliographystyle{latex8}
\bibliography{time10-ea-bib}

\end{document}